\newtheorem{lem}{Lemma}
\newtheorem{thm}{Theorem}
\newtheorem{defi}{Definition}
\newcommand{\be}{\begin{equation}}
\newcommand{\ee}{\end{equation}}
\def\bea{\begin{eqnarray}}
\def\eea{\end{eqnarray}}
\def\ba{\begin{array}}
\def\ea{\end{array}}
\def\nn{\nonumber}
\begin{document}

\begin{frontmatter}

\title{ Streaming algorithm for balance gain and cost with cardinality constraint on the integer lattice\tnoteref{t1}}
\author[1]{Jingjing Tan }
\ead{tanjingjing1108@163.com}

\cortext[cor1]{Corresponding author}
\address[1]{School of Mathematics and Statistics, Weifang University,\\
Weifang 261061,  P.R. China}

\begin{abstract}
  Team formation problem is a very important problem in the labor market, and it is proved to be NP-hard. In this paper, we design an efficient bicriteria streaming algorithms to construct a balance between gain and cost in a team formation problem with cardinality constraint on the integer lattice.
  To solve this problem, we establish a model for maximizing the difference between a nonnegative normalized monotone submodule function and a nonnegative linear function.
  Further, we discuss the case where the first function is $\alpha$--weakly submodular.
  Combining the lattice binary search with the threshold method, we present an online algorithm called  bicriteria streaming algorithms. Meanwhile,  we give detailed analysis for both of these models.

\end{abstract}

\begin{keyword}
Integer lattice \sep  Streaming algorithm \sep Knapsack constraint  \sep Cardinality function
\end{keyword}

\end{frontmatter}

\section{ Introduction}
Team formation problems arise in various contexts, such as project management, workforce allocation, sports team composition, and academic collaboration, and have recently attracted much attention\cite{2012Ana, 2018Ana,2009Lappas,2022Wang}. The goal of the team formation problem is a balance between the benefits gained and the costs incurred at the same time, which varies widely depending on the context. For example, in project management, the objective might be to maximize team productivity or minimize project completion time. In sports team composition, the goal may be to assemble a team with complementary skills and abilities to enhance overall performance. These problems can be formulated as a submodular optimization problem, aiming to select a set of elements $ V $ from a collection $ E $ in order to maximize a certain objective function $ g(V) $. In many practical scenarios, it is necessary to strike a balance between the benefits of selecting $ V $, quantified by the function $ g(V) $, and the associated cost denoted as $ c(V) $. To cater to these application domains, Nikolakaki et al.\cite{2020Ene} characterized the team formation problem as the following maximization problem model: $$ \max\limits_{ V \in E }\  g(V)-c(V), $$
where the function $ g(V) $ is monotone and non-negative submodular while $ c(V) $ represents a non-negative linear function that sums up the costs of selected elements. This model provides generalization for various data mining applications.
When the maximum number of elements to be selected is given as part of the input using the unconstrained model. Consider a constrained model when it comes to finding the optimal number of elements to be part of the solution. For cardinality constraints, they used the standard greedy algorithm to design algorithms that have provable approximation guarantees, where the elements of the collection are parted into groups. In designing cardinal-constraint algorithms with provable approximate guarantees, they taked the standard greedy algorithm where the elements of the collection are parted into groups. This method is very effective in solving radix constraint problems and can provide better approximate solutions. By grouping elements, it can better handle situations of higher complexity and have more flexibility to deal with various constraints, such as matroid constraints. In addition, they also consider online versions of the above questions. By using a variant of the continuous greedy technique, Sviridenko et al. \cite{2017Sviridenko}  acquired a solution set S such that
$ g(V)-c(V) \geq ( 1 - e^{-1} ) \cdot g(V) - c(V) $
for the same model with a matroid constraint.
Du et al.\cite{2014Du} proposed a two-criterion approximation algorithm, which is considered to be a more suitable performance measurement method compared with traditional approximation algorithms.
Feldman \cite{2019Feldman} designed bicriteria approximation algorithm and obtained the same conclusion as that in Sviridenko et al. \cite{2017Sviridenko}.

In the team formation problem, we will also encounter a fixed skill can be owned by multiple team members at the same time, in this case, we need to decide not only which skill to choose, but also to decide the number of choices, then the problem becomes a integer lattice submodule maximization problem. The concepts of lattice theory and optimization was first proposed by Lovasz \cite{1979 Lovasz}. The modular maximization problem of integer lattices is a classical extension of the modular maximization problem. For a finite set $E$ with size $n$, for each ${e_i}\in E$, let $\chi_{e_i}$ be the $i$-th unit vector. Denote by $\mathbb{N}^{E}$  the integer lattices defined on the $E$, for any $\boldsymbol{x}, \boldsymbol{y}$ with $\boldsymbol{x} \leq \boldsymbol{y}$, if the following inequality
 $$     g( \boldsymbol{y}+\chi_{e_i} ) - g( \boldsymbol{y} )
   \leq f( \boldsymbol{x}+\chi_{e_i} ) - f( \boldsymbol{x} ) $$
holds, we say that $ g $ satisfies the  DR-submodularity and that $ g $ is called a DR-submodular  function.
Soma et al. \cite{2018Soma} Studied DR-submodular maximization problem with cardinality constraint, knapsack constraint and matroid constraint respectively. With the help of threshold technique and binary search technique, they obtained a $ ( 1 - e^{-1} ) $ -threshold greedy algorithm.
Gottschalk et al. \cite{2015Gottschalk} cleverly used the bidirectional greed technique, that is, according to the predetermined order of indicators, respectively from 0 and the upper certainty to rise and fall, iterated until the two are equal to output the final solution, and gave a 1/3 polynomial-time algorithm for for the non-monotonic modular function maximization problem on integer lattices.
Nong et al. \cite{2020Nong} further improved the approximate ratio to 1/2. Based on the optimal budget allocation problem, Soma et al. \cite{2014Soma} established a monotone lattice submodular maximum on integer lattices with knapsack constraints
The approximation ratio is $( 1 - e^{-1} )$.
By using the improved threshold technique, Tan et al. \cite{2023Tan} proposes an online threshold streaming algorithm with 1/3-approximate ratios for the lattice submodular maximization problems with knapsack constraints in a flow model.

Although the submodular function maximization problem is widely used, there are still many problems that cannot be described well. The development of non-submodular function maximization problem is an important progress in the field of mathematical modeling and optimization, which is of great significance for both theoretical research and practical application. The research of non-submodular function maximization is also expanding to new application areas, such as social network influence maximization and resource allocation optimization. These applications are advancing the field and inspiring new research directions \cite{2012KM,2007S,2014SK}.
At first, scholars used submodular rate which quantifies the difference between the function and the submodular function to describe the non-submodular function.
The concept of submodular rate for a set function was initially proposed by
Das et al. \cite{2011Das}.
Bian et al. \cite{2017Bian} developed a greedy algorithm for non-submodular maximization problems with cardinality constraints on finite sets using submodular rates. By introducing the generalized curvature $c$ of the set function, they achieved an approximation ratio of
$ ( 1 - e^{ \alpha c } ) / c $ and proved its compactness.
Liu et al. \cite{2021Liu} studied the problem of maximizing the sum of a monotone weak submodular function and a supermodular function, providing both offline and streaming algorithms for it.
Kuhnle et al. \cite{2018Kuhnle} extended the notion of submodular rate to lattice submodular function and designed a threshold greedy algorithm on an integer lattice to solve this problem efficiently.
Zhang et al., \cite{2019Zhang} devised a greedy algorithm that incorporates curvature to improve the approximation ratio significantly.
Using threshold technique and lattice binary search technique, Tan \cite{2022Tan} proposes a threshold flow algorithm for the above problems.

Inspired by the above results, we design an efficient bicriteria streaming algorithms to construct a balance between gain and cost in a team formation problem with cardinality constraint on the integer lattice.
Let $E$ of size $n$ be the ground set,   $\boldsymbol{x}$ be an $n$-dimensional vector of $\mathbb{N}^{E} $, and $g$ be a  normalized submodular  function defined on $\mathbb{N}^{E} $ which satisfies nonnegativity and monotonicity, $c$ be a  linear  function defined on $\mathbb{N}^{E} $.
Then, the problem can be described briefly as below
\bea\label{main problem1}\nn
& \mathrm{maximize}&   g(\boldsymbol{x})-c(\boldsymbol{x}),\\
& \mathrm{s. t. } &  \boldsymbol{x}\leq \boldsymbol{b}, \\\nn
& &  \boldsymbol{x}(E)\leq k.
\eea
where the total budget is $k \in \mathbb{N}$, and $\boldsymbol{B}=\{\boldsymbol{x}\in \mathbb{N}^{E}: \boldsymbol{x}\leq\boldsymbol{b}\}$  is a box in
$\{ \mathbb{N}\bigcup \
\{+\infty\} \} ^{E}$.

For the problem, we first design Algorithm \ref{alg1} when the objective function is a monotone non-negative submodular function. In Algorithm 1, we call a binary search algorithm to decide whether to keep the currently arrived element and set a level for the remaining elements in memory. The effect of algorithm \ref{alg2} on the time complexity of the whole algorithm is so slight that we can ignore it.
Further, we discuss the case where the objective function is a non-submodular function, and we design a streaming algorithm \ref{alg3} with the help of lattice submodular rate, and analyze the performance of the algorithm.
Moreover,  the space complexity  and the  memory of the whole algorithm related to the lattice submodular rate, which is $ O ( k \log ( k / \alpha ) / \varepsilon)$.

We will divide the main results of this paper into three parts. In section \ref{sec2}, we will elaborate on the definitions and properties of integer lattices and submodules. In section \ref{sec3}, we design two combination algorithms to solve the above problems and analyze the performance of the two algorithms. Finally, we summarize the main results in the \ref{sec4} section.

\section{ Preliminaries}
\label{sec2}
In this section, we elucidate numerous symbols, conceptual definitions and axioms.

 $k$ is a natural number denoted as $k\in \mathbb{N^{+}}$ . We use $[k]$ to denote all positive integers between 1 and $k$. We establish a base set $E=\{e_1, e_2, \cdots, e_n\} $ and $\boldsymbol{x}$ as an $n$-dimensional vector in $\mathbb{N}^{E}$. The segment of coordinate $e_i\in E$ for $\boldsymbol{x}$ is represented as  $\boldsymbol{x}(e_i)$.
We employ $\boldsymbol{0}$ to depict the null vector and $\chi_{e_i}$ as the standard base vector. Thus, all components are at zero, except for the $i$-th component that stands at unity.
For any subset $ X\subseteq E$, its characteristic vector is $\chi_{X}$ and $\boldsymbol{x}(X)$  amounts to the summation of $\boldsymbol{x}(e_i)$ over all elements ei belonging to $\boldsymbol{x}$.
Within the set  $\boldsymbol{x}\in \mathbb{N}^{E}$, $supp^{+}(\boldsymbol{x})=\{e\in E|\boldsymbol{x}(e)>0 \}$ signifies the support set of $\boldsymbol{x}$.
We define a multi-set $\{\boldsymbol{x}\}$, where $e$ appears $\boldsymbol{x}(e)$ times, and use $| \{\boldsymbol{x}\}|:=\boldsymbol{x}(E)$ to express the total amount of $\boldsymbol{x}(E)$.
$\boldsymbol{x}\vee \boldsymbol{y}$ and $\boldsymbol{x}\wedge \boldsymbol{y}$  stand for the coordinate-wise maximum and minimum of vectors $\boldsymbol{x}$ and $\boldsymbol{y}$ respectively.
Multi-sets $\boldsymbol{x}$ and $\boldsymbol{y}$  have their difference denoted as: $\{\boldsymbol{x}\} \setminus \{\boldsymbol{y}\} := \{ (\boldsymbol{x} \setminus \boldsymbol{y}) \vee \boldsymbol{0} \}$

A function $ f: \mathbb{N} ^{E}\rightarrow \mathbb{R_+} $ is known as monotone non-decreasing if and only if  $ f(\boldsymbol{x}) \leq f(\boldsymbol{y})$ can be satisfied for any $\boldsymbol{x} \leq \boldsymbol{y}$.
A function f is deemed nonnegative and normalized if it invariably fulfills the conditions: $ f(\boldsymbol{x}) \geq 0 $ (for all $ \boldsymbol{x} \in \mathbb{N} ^{E} $ ) and $f(\boldsymbol{0})=0$.

\begin{defi}\label{def2.1}
A function $ f: \mathbb{N} ^{E} \rightarrow \mathbb{R}_{+} $ is termed as (lattice) submodular if $f$ fulfills this particular inequality for every $\boldsymbol{x}$ and $\boldsymbol{y}$ in $\mathbb{N} ^{E} $:
$$ f(\boldsymbol{y} + \chi_e) - f(\boldsymbol{y}) \le f(\boldsymbol{x} + \chi_e) - f(\boldsymbol{x}). $$
\end{defi}

\begin{defi}\label{def2.2}
Informally known as DR-submodular, a function $ f: \mathbb{N} ^{E}\rightarrow \mathbb{R}_{+} $ is thought of as diminishing return submodular if it qualifies the following inequality for any $ e \in E$, and any vector $\boldsymbol{x}$ and $\boldsymbol{y}$ in $\mathbb{N} ^{E} $ while $ \boldsymbol{x} \leq \boldsymbol{y} $:
$$ f(\boldsymbol{y} + \chi_e) - f(\boldsymbol{y}) \le f(\boldsymbol{x} + \chi_e) - f(\boldsymbol{x}). $$
\end{defi}

The value denoted by $ f( \boldsymbol{x}|\boldsymbol{y} ) $ presents the marginal increase of a vector $\boldsymbol{x}$ concerning $\boldsymbol{y}$ expressed as:
$$ f( \boldsymbol{x} | \boldsymbol{y} ) = f( \boldsymbol{x} + \boldsymbol{y} ) - f( \boldsymbol{y} ). $$.

Lastly, denote $\boldsymbol{x}^{*}$ as the ideal solution vector.

\section{ Bicriteria Streaming Algorithms}
\label{sec3}
For the case where the first function is submodular and non-submodular, we will design two combinatorial algorithms and analyze the performance of the two  respectively in this section.

\subsection{ Maximizing $ G - C $  function}
\label{subsec3.1}
First, we give a streaming algorithm for G-C on the integer lattice which is stated in Algorithm \ref{alg1}. In the process of running the algorithm, we not only need to decide whether the current element is preserved, but also count the level of the elements that are preserved. At this point, we need to call Algorithm 2 to perform the calculation.
The Algorithm \ref{alg2} details are as follows.

Denote$\boldsymbol{x^{*}}$ be the optimal solution, $\boldsymbol{x}$ is the final solution vector return by Algorithm \ref{alg1}. $p=\boldsymbol{x^{*}}(E)$, $q=\boldsymbol{x}(E)$. Assume that $\boldsymbol{x}_{i-1}$ is the set when the element $e_i$ arrives, $i= 1, 2, \cdots, q$, $\boldsymbol{x}_0=\boldsymbol{0}$, $\boldsymbol{x}_q=\boldsymbol{x}$. There are two case $\boldsymbol{x}(E)=k$(lem1) and $\boldsymbol{x}(E)<k$(lem2).

  \begin{algorithm}[h]
     \caption{Streaming algorithm for G-C on the integer lattice }\label{alg1}
     \begin{algorithmic}[1]
     \REQUIRE   Data stream $E$,  $ g \in \mathcal{\mathcal{G}}_b $, $ c \in \mathcal{\mathcal{C}}_b $, cardinality constraint $k\in \mathbb{N}_+$,  threshold value $\tau$,
                $t \geq 1$.
     \ENSURE a vector $\boldsymbol{x}\in \mathbb{N}^{E}$.

     \STATE {$ \boldsymbol{x}\leftarrow \boldsymbol{0}$;}
        \FOR {$e\in E$}
           \IF{ $ \boldsymbol{x}(E) < k$,  }
                \STATE {find the level $l$ with $ \textbf{BS Algorithm} \bigg( g,c, \boldsymbol{x}, \boldsymbol{b}, e, k, \tau \bigg); $}
             \IF{ $ \boldsymbol{x}+l\chi_{e_i}(E)\leq k$, }
                 \STATE {update $\boldsymbol{x}\leftarrow \boldsymbol{x}+l\chi_{e_i}; $}
             \ENDIF
           \ENDIF
         \ENDFOR
     \RETURN {$\boldsymbol{x}$}
     \end{algorithmic}
  \end{algorithm}

  \begin{algorithm}[h]
    \caption{BS Algorithm $(g, c, \boldsymbol{x}, \boldsymbol{b}, e, k, \tau)$ }
    \label{alg2}
    \begin{algorithmic}[1]
        \REQUIRE {Data stream $E$, $e\in E$,  submodular function $g\in \mathcal{\mathcal{G}}_b$, $c\in \mathcal{\mathcal{C}}_b$,   $\boldsymbol{x} \in \mathbb{N}^{E}$,  and $\tau \in \mathbb{R}_+$.}
        \ENSURE {$l$}

        \STATE {$ l_m\leftarrow 1 $;}
        \STATE {$l_n\leftarrow \min \bigg\{\boldsymbol{b}(e)-\boldsymbol{x}(e), k-\boldsymbol{x}(e)\bigg\}$;}
           \IF {$g(\chi_e|\boldsymbol{x})- t c(\chi_e) < \tau$}
               \STATE {return $0$.}
           \ENDIF
           \IF {$g(l_n\chi_e|\boldsymbol{x})- t c(l_n\chi_e)  \geq \tau$}
               \STATE {return $l_n$}
           \ENDIF
           \WHILE{ $l_m<l_n+1$,}
                  \STATE { $a=\lfloor \frac{l_n+l_m}{2}\rfloor$ }
               \IF{$g(l_n\chi_e|\boldsymbol{x})- t c(l_n\chi_e)  \geq \tau$ }
                   \STATE{$l_m=a,$}
               \ELSE{}
                   \STATE{$l_n=a,$}
               \ENDIF
           \ENDWHILE
           \RETURN {$l_m$}
      \end{algorithmic}
   \end{algorithm}

\begin{lem}\label{lem3.1}
    Let $ \boldsymbol{x} $ be the output of Algorithm \ref{alg1}.
    If $ \boldsymbol{x} (E) = k $, then the value of the function $ g ( \boldsymbol{x} ) - c ( \boldsymbol{x} ) $ is not less than $ k\tau $.
\end{lem}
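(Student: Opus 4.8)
The plan is to combine the threshold guarantee supplied by the binary-search subroutine (Algorithm \ref{alg2}) with a telescoping sum over the updates performed by Algorithm \ref{alg1}. First I would fix notation in the case $\boldsymbol{x}(E)=k$: let $\boldsymbol{x}_0=\boldsymbol{0},\boldsymbol{x}_1,\ldots,\boldsymbol{x}_q=\boldsymbol{x}$ denote the successive vectors maintained at the instants an update is accepted, so that $\boldsymbol{x}_i=\boldsymbol{x}_{i-1}+l_i\chi_{e^{(i)}}$ where $l_i\ge 1$ is the level returned by the $i$-th call to \textbf{BS Algorithm} on element $e^{(i)}$. By construction the accepted levels satisfy $\sum_{i=1}^{q} l_i=\boldsymbol{x}(E)=k$.

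The key step is the per-update guarantee furnished by Algorithm \ref{alg2}: since that routine only returns a level clearing the threshold, every accepted batch obeys
$$ g\bigl(l_i\chi_{e^{(i)}}\mid \boldsymbol{x}_{i-1}\bigr)-t\,c\bigl(l_i\chi_{e^{(i)}}\bigr)\;\ge\; l_i\,\tau, $$
that is, each unit of level added contributes net marginal value at least $\tau$. To justify this I would use that $g$ is (lattice) submodular, so the increments $\delta_j=g\bigl(\boldsymbol{x}_{i-1}+j\chi_{e^{(i)}}\bigr)-g\bigl(\boldsymbol{x}_{i-1}+(j-1)\chi_{e^{(i)}}\bigr)$ are non-increasing in $j$; since $c(l\chi_{e^{(i)}})=l\,c(\chi_{e^{(i)}})$ is linear, the net marginal $\phi(l)=g(l\chi_{e^{(i)}}\mid\boldsymbol{x}_{i-1})-t\,c(l\chi_{e^{(i)}})$ is concave in $l$, so its average $\phi(l)/l$ is monotone and the superlevel set $\{l:\phi(l)\ge l\tau\}$ is a down-set. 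This is exactly what makes the binary search over levels correct and what certifies that the returned $l_i$ has all of its units clearing the threshold.

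With the per-update bound in hand I would sum it over $i=1,\ldots,q$. The gain terms telescope, and invoking normalization $g(\boldsymbol{0})=0$ together with linearity of $c$ (so that $\sum_i c(l_i\chi_{e^{(i)}})=c(\boldsymbol{x})$) yields
$$ g(\boldsymbol{x})-t\,c(\boldsymbol{x})\;\ge\;\tau\sum_{i=1}^{q} l_i\;=\;k\tau. $$
Finally, because $t\ge 1$ and $c$ is nonnegative we have $c(\boldsymbol{x})\le t\,c(\boldsymbol{x})$, hence $g(\boldsymbol{x})-c(\boldsymbol{x})\ge g(\boldsymbol{x})-t\,c(\boldsymbol{x})\ge k\tau$, which is the claim.

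The main obstacle is the per-update guarantee rather than the arithmetic that follows it: one must read the threshold test inside Algorithm \ref{alg2} as a \emph{per-unit} (equivalently, average) condition of value $\tau$, not a per-batch condition, since only the former accumulates to $k\tau$ over the $k$ units; and one must supply the concavity argument above to be sure that the level returned by the binary search really certifies a net marginal of at least $l_i\tau$. Once that invariant is secured, the telescoping together with $g(\boldsymbol{0})=0$, linearity of $c$, and $t\ge 1$ is entirely routine.
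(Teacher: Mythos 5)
Your proof is correct and follows the same route as the paper's: a per-update threshold guarantee from the binary-search subroutine, a telescoping sum using $g(\boldsymbol{0})=0$ and the linearity of $c$, and the closing observation that $t\ge 1$ and $c\ge 0$ allow the factor $t$ to be dropped. The one place you genuinely add something is the per-unit reading of the threshold test, and it is worth keeping. The paper's proof sums the batch inequality $g(l_i\chi_{e_i}\mid\boldsymbol{x}_{i-1})-t\,c(l_i\chi_{e_i})\ge\tau$ over accepted elements and writes the right-hand side as $\tau\,\boldsymbol{x}(E)=\tau k$, which silently treats the sum as ranging over the $k$ units of the multiset $\{\boldsymbol{x}\}$ rather than over the (possibly far fewer) accepted batches; taken literally, that sum only yields $\tau$ times the number of batches. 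Your argument that diminishing returns along the coordinate $e$ together with the linearity of $c$ make $\phi(l)/l$ non-increasing, so that the accepted level satisfies $\phi(l_i)\ge l_i\tau$, is exactly what is needed for the accounting to reach $k\tau$, and it matches the paper's own later remark that a retained level must satisfy $f(l\chi_e\mid\boldsymbol{x})/l\ge\tau$. The only caveat is that monotonicity of the single-coordinate increments requires DR-submodularity rather than lattice submodularity in the weak sense; since the paper's two definitions coincide as stated, this is not a gap relative to its hypotheses, but it deserves a sentence if the definitions are ever repaired. You also correctly note that passing from the sum of marginals to $g(\boldsymbol{x})$ is an exact telescoping identity, not an application of submodularity as the paper asserts.
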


\begin{proof}
   According to Algorithm \ref{alg2}, for any  element $ e_i \in \boldsymbol{x} $, we get that the following inequality is satisfied
   $$ g( l_i \chi_{ e_i } | \boldsymbol{x}_{i-1} ) - t c ( l_i \chi_{ e_i } ) \geq \tau, $$
   summing up all the inequalities above, we get
   we obtain $$ \sum \limits_{ e_i \in \{ \boldsymbol{x} \} }  g ( l_i \chi_{ e_i } | \boldsymbol{x}_{i-1} )
                   - t \sum \limits_{ e_i \in \{ \boldsymbol{x} \} }  c ( l_i \chi_{ e_i } )
               \geq \sum \limits_{ e_i \in \{ \boldsymbol{x} \} } \tau
               = \tau \boldsymbol{x} (E).$$
   With the aid of the submodularity of the function $ g $ and the linearity of  function $ c $, we can rewrite the above inequality as following
    $$ g ( \boldsymbol{x} ) - t c ( \boldsymbol{x} ) \geq \tau \boldsymbol{x} (E) = \tau k. $$
   Since $ t \geq 1 $,
   we get $$ g ( \boldsymbol{x} ) - c ( \boldsymbol{x} ) \geq g ( \boldsymbol{x} ) - t c ( \boldsymbol{x} ) \geq  \tau k. $$

   which completes the proof.
\end{proof}

\begin{lem}\label{lem3.2}
   Suppose $ \boldsymbol{x} $ is the output solution of Algorithm \ref{alg1} and satisfied $ \boldsymbol{x}(E) < k $, then the objective function value satisfies the following inequality
   $$ g ( \boldsymbol{x} ) - c ( \boldsymbol{x} ) \geq  \frac{ t - 1 } { t } g ( \boldsymbol{ x^{*} } )
      - ( t-1 ) c ( \boldsymbol{ x^{*} } )
      + k \tau ( \frac{1}{t} \beta - \frac{ t - 1 } { t } \alpha ), $$
   where $ 0 < \alpha \leq 1 $, $ 0 < \beta \leq 1 $, $ t \geq 1 $.
\end{lem}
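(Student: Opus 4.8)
The plan is to establish two lower bounds on $g(\boldsymbol{x})$ --- one coming from the copies the algorithm \emph{accepted} and one from the copies of $\boldsymbol{x}^{*}$ it \emph{rejected} --- and then to take a weighted combination of them with weights $\tfrac{1}{t}$ and $\tfrac{t-1}{t}$. Throughout I would write $q=\boldsymbol{x}(E)$ and set $\beta=q/k$, and I would let $m=|\{\boldsymbol{x}^{*}\}\setminus\{\boldsymbol{x}\}|=\sum_{e}(\boldsymbol{x}^{*}(e)-\boldsymbol{x}(e))^{+}$ be the number of ``missing'' copies and set $\alpha=m/k$. Since $\boldsymbol{x}^{*}(E)\le k$ and $q<k$, these quantities satisfy $0<\beta\le 1$ and $0<\alpha\le 1$, matching the hypotheses.

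First I would rerun the telescoping argument of Lemma \ref{lem3.1}, but \emph{without} assuming saturation. Every accepted copy passes the threshold test of Algorithm \ref{alg2}, so summing the per-unit inequalities over the accepted copies and using submodularity of $g$ together with linearity of $c$ to collapse the marginals into $g(\boldsymbol{x})$ and $c(\boldsymbol{x})$ yields
\[ g(\boldsymbol{x})-t\,c(\boldsymbol{x})\ge q\tau=\beta k\tau. \qquad (\mathrm{A}) \]
Next I would exploit the hypothesis $\boldsymbol{x}(E)<k$: because the running solution never reaches the cardinality bound, the cardinality check in Algorithm \ref{alg1} never forces a rejection, so every copy of $\boldsymbol{x}^{*}$ absent from $\boldsymbol{x}$ must have \emph{failed} the threshold test when its coordinate was processed, i.e. its marginal obeys $g(\chi_{e}\mid\cdot)<\tau+t\,c(\chi_{e})$ at that instant. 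Adding the missing copies one at a time onto $\boldsymbol{x}$ and invoking DR-submodularity (each such marginal only decreases when evaluated on a larger base) gives the telescoping estimate
\[ g(\boldsymbol{x}^{*}\vee\boldsymbol{x})-g(\boldsymbol{x})\le m\tau+t\,c\big((\boldsymbol{x}^{*}-\boldsymbol{x})\vee\boldsymbol{0}\big). \]
Finally, monotonicity gives $g(\boldsymbol{x}^{*})\le g(\boldsymbol{x}^{*}\vee\boldsymbol{x})$ and nonnegativity of $c$ gives $c((\boldsymbol{x}^{*}-\boldsymbol{x})\vee\boldsymbol{0})\le c(\boldsymbol{x}^{*})$, so
\[ g(\boldsymbol{x})\ge g(\boldsymbol{x}^{*})-m\tau-t\,c(\boldsymbol{x}^{*})=g(\boldsymbol{x}^{*})-\alpha k\tau-t\,c(\boldsymbol{x}^{*}). \qquad (\mathrm{B}) \]

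To conclude I would rewrite (A) as $\tfrac{1}{t}g(\boldsymbol{x})-c(\boldsymbol{x})\ge \tfrac{\beta k\tau}{t}$, multiply (B) by $\tfrac{t-1}{t}$, and add the two; the $g(\boldsymbol{x})$ coefficients combine to $1$, the $c(\boldsymbol{x})$ term survives intact, and the right-hand sides assemble into
\[ g(\boldsymbol{x})-c(\boldsymbol{x})\ge\frac{t-1}{t}g(\boldsymbol{x}^{*})-(t-1)c(\boldsymbol{x}^{*})+k\tau\Big(\frac{1}{t}\beta-\frac{t-1}{t}\alpha\Big), \]
which is exactly the claim.

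I expect the main obstacle to lie in Step (B). One has to argue cleanly that, under $\boldsymbol{x}(E)<k$, an optimal copy left out of $\boldsymbol{x}$ is excluded by the threshold and not by the cardinality check of Algorithm \ref{alg1}, and then to apply DR-submodularity correctly so that the rejection-time bound $g(\chi_{e}\mid\cdot)<\tau+t\,c(\chi_{e})$ is still valid when that same copy is later appended onto the strictly larger base arising in the telescoping sum for $g(\boldsymbol{x}^{*}\vee\boldsymbol{x})-g(\boldsymbol{x})$. By contrast, the accepted-copy bound (A) is essentially a rerun of Lemma \ref{lem3.1}, and the final weighted combination is routine algebra.
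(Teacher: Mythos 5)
Your proposal is correct and takes essentially the same route as the paper: the paper likewise derives the accepted-copy bound $g(\boldsymbol{x})-t\,c(\boldsymbol{x})\ge \tau\,\boldsymbol{x}(E)$ (stated there as an upper bound on $c(\boldsymbol{x})$) and the rejected-copy bound $g(\boldsymbol{x})\ge g(\boldsymbol{x}^{*})-t\,c(\boldsymbol{x}^{*})-\tau\,\bar{\boldsymbol{x}}(E)$ with $\bar{\boldsymbol{x}}=(\boldsymbol{x}^{*}-\boldsymbol{x})\vee\boldsymbol{0}$, and then combines them with exactly your $\tfrac{1}{t},\tfrac{t-1}{t}$ weighting before substituting $\bar{\boldsymbol{x}}(E)=\mu k$ and $\boldsymbol{x}(E)=\nu k$ (the paper's $\mu,\nu$ are your $\alpha,\beta$). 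Your treatment of the telescoping step via $\boldsymbol{x}^{*}\vee\boldsymbol{x}$ and monotonicity is, if anything, slightly more careful than the paper's.
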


\begin{proof}
     First, we denote
     $ \{ \bar{ \boldsymbol{x} } \} = \{ \boldsymbol{ x^{*} } \} \setminus \{ \boldsymbol{x} \} $,
     $ s = \bar{ \boldsymbol{x} } (E)$.
     let $  \boldsymbol{x}_{i-1} $ be the corresponding output solution for each element $ e_i $ $ ( i \in \{ 1, \cdots, s \} ) $ that $ \chi_{ e_i } $ has arrived but not been added to $ \boldsymbol{x} $.

     Next, we will prove the lower bound of function $ g (  \boldsymbol{x}  ) $ and the upper bound of $ c (  \boldsymbol{x}  ) $ respectively.
   Now, We proof the the lower bound of $ g (  \boldsymbol{x}  ) $. According to Algorithm \ref{alg2}, for each element $ e_i \in \{ \bar{ \boldsymbol{x} } \} $,
   we have $$ g ( \chi_{ e_i } \mid \boldsymbol{x}_ { i-1 } ) - t c ( \chi_{ e_i } ) < \tau.$$
   Summing up all the above inequalities of  the elements in
   $ \bar{ \boldsymbol{x} } $, we obtain
     \bea\label{eq3.1.1}
          \sum \limits_{ e_i \in \{ \boldsymbol{x} \} }  g (  \chi_{ e_i } | \boldsymbol{x}_{i-1} )
          - t \sum \limits_{ e_i \in \{ \boldsymbol{x} \} }  c ( \chi_{ e_i } )
        < \sum \limits_{ e_i \in \{ \boldsymbol{x} \} } \tau
        = \tau | \{ \bar{ \boldsymbol{x} } \} |
     \eea
   Using the submodularity of the function  $ g $, we have
      \bea
         \nn     & &      \sum \limits_{ e_i \in \{ \bar{ \boldsymbol{x} } \} } g (  \chi_{ e_i } | \boldsymbol{x}_{i-1} )
         \\ \nn & \geq &  \sum \limits_{ e_i \in \{ \bar{ \boldsymbol{x} } \} } g (  \chi_{ e_i } | \boldsymbol{x} )
         \\ \nn & \geq &  g ( \bar{ \boldsymbol{x} } | \boldsymbol{x} )
         \\ \nn &  =   &  g ( \bar{ \boldsymbol{x} } + \boldsymbol{x} ) - g ( \boldsymbol{x} )
         \\ \nn & \geq &  g ( \boldsymbol{ x^{*} } + \boldsymbol{x} ) - g ( \boldsymbol{x} ).
       \eea
   Since function $ c $ is non-negative linear, we get
      $$ \sum \limits_{ e_i \in \{ \bar{ \boldsymbol{x} } \} } c (  \chi_{ e_i } ) = c (  \bar{ \boldsymbol{x} } ) \leq c ( \boldsymbol{ x^{*} } ). $$
   By combining the above two inequalities with the monotonicity of the function $ g $, we can derive the following inequality
      $$ g ( \boldsymbol{ x^{*} } ) - g ( \boldsymbol{x} ) - t c ( \boldsymbol{ x^{*} } ) \leq \tau | \bar{ \boldsymbol{x} } (E) |, $$
  that is,
      \bea  \label{3.1.3}
       g ( \boldsymbol{x} )
       \geq  g ( \boldsymbol{ x^{*} } ) - t c ( \boldsymbol{ x^{*} } ) - \tau | \bar{
                  \boldsymbol{x} } (E) |.
      \eea

   Moreover, We can get the upper bound of a function $ c ( \boldsymbol{x} ) $ that is related to the function $ g ( \boldsymbol{x} ) $.
   From the running process of the algorithm, we can see that all the elements selected into the output solution  $ \boldsymbol{x} $ satisfy the following inequality
      $$  g ( l_i \chi_{ e_i } | \boldsymbol{x}_{i-1} ) - t c ( l_i \chi_{ e_i } ) \geq \tau. $$
   Summing up above all the inequality which the element $ e_i \in \{ \boldsymbol{x} \} $, we get
      $$ \sum \limits_{ e_i \in \{ \boldsymbol{x} \} } g ( l_i \chi_{ e_i } | \boldsymbol{x}_{i-1} )
            - t \sum \limits_{ e_i \in \{ \boldsymbol{x} \} } c ( l_i \chi_{ e_i } )
        \geq \sum \limits_{ e_i \in \{ \boldsymbol{x} \} } \tau
         =  \tau \boldsymbol{x} (E). $$
   Under the assumption that the function $ g ( \boldsymbol{x} ) $is a monotone submodular function and that the function $ c $ is a linear function, we can estimate the term of the function $ c ( \boldsymbol{x} ) $, that is,
      \bea  \label{3.1.4}
       c ( \boldsymbol{x} ) \leq \frac{1}{t} g ( \boldsymbol{x} ) - \frac{\tau}{t} \boldsymbol{x}(E).
       \eea
   Then, by combining inequality (\ref{3.1.3}) and inequality (\ref{3.1.4}), we get the lower bound of the value of objective function $ g ( \boldsymbol{x} ) - c ( \boldsymbol{x} ) $ as follows
       \bea \label{3.1.5}\nn
         & & g ( \boldsymbol{x} ) - c ( \boldsymbol{x} )
        \\ \nn & \geq &  g ( \boldsymbol{x} ) -  \frac{1}{t} g ( \boldsymbol{x} ) + \frac{\tau}{t} \boldsymbol{x}(E)
        \\  & \geq & \frac{t-1}{t} g ( \boldsymbol{x^{*}} ) - (t-1) c ( \boldsymbol{x^{*}} )
                      - \frac{t-1}{t} \bar{ \boldsymbol{x} }(E) \tau + \frac{\tau}{t} \boldsymbol{x} (E)
       \eea
   For the rest of the proof, we do this by introducing two parameters $ \mu ( 0 \leq \mu \leq 1 ) $ and $ \nu ( 0 \leq \mu \leq 1 )$ that depend on the final output set or the current set.  We assume without loss of generality that $ \bar{ \boldsymbol{x} }(E) = \mu k $, $ \boldsymbol{x} (E) = \nu_k $.  During algorithmic analysis, we only use $ \mu $ and $ \nu $ to refer to multiple factors, so they are need to consistent with other parameters in the current analysis.  Then we can rewrite the inequality (\ref{3.1.5})
       \bea \nn
         & & g ( \boldsymbol{x} ) - c ( \boldsymbol{x} )
        \\ \nn & \geq & \frac{t-1}{t} g ( \boldsymbol{x^{*}} ) - (t-1) c ( \boldsymbol{x^{*}} )
                      - \frac{t-1}{t}\mu k \tau + \frac{\tau}{t} \boldsymbol{x} (E)
        \\ \nn & = &   \frac{t-1}{t} g ( \boldsymbol{x^{*}} ) - (t-1) c ( \boldsymbol{x^{*}} )
                     +  k \tau ( \frac{1}{t} \nu - \frac{t-1}{t} \mu )
       \eea
   So, we can come to the conclusion that the output set $ \boldsymbol{x} $ in Lemma \ref{lem3.2} satisfies
       $$     g ( \boldsymbol{x} ) - c ( \boldsymbol{x} )
        \geq  \frac{t-1}{t} g ( \boldsymbol{x^{*}} ) - (t-1) c ( \boldsymbol{x^{*}} )
                     +  k \tau ( \frac{1}{t} \nu - \frac{t-1}{t} \mu ), $$
   for $ 0 \leq \mu \leq 1, 0 \leq \nu \leq 1, t \geq 1.$

      which completes the proof.
\end{proof}

\begin{thm}\label{thm3.1}
   The bicriteria ratio for  Algorithm \ref{alg1} is
   $ ( \frac{ t-1 }{ t + \mu ( t-1 ) - \nu }, \frac{ t-1 }{  t + \mu ( t-1 ) - \nu } \cdot t ) $,
   where $ t = \frac{ 2 + \mu + \sqrt{ \delta} } { 2 } $, $ \delta= \mu^{2} + 4 - 4 \nu $, $ 0 \leq \mu \leq 1, 0 \leq \nu \leq 1$.
\end{thm}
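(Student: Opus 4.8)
The plan is to fuse the two regimes isolated in Lemma \ref{lem3.1} and Lemma \ref{lem3.2} into a single guarantee of the form $g(\boldsymbol{x})-c(\boldsymbol{x}) \geq \rho_1\, g(\boldsymbol{x^{*}}) - \rho_2\, c(\boldsymbol{x^{*}})$, and then to pin down the threshold $\tau$ and the parameter $t$ that make this hold regardless of which regime the run falls into. Keeping the notation of the proof of Lemma \ref{lem3.2}, I would write $\bar{\boldsymbol{x}}(E)=\mu k$ and $\boldsymbol{x}(E)=\nu k$, and abbreviate the denominator $t+\mu(t-1)-\nu$ as $(1+\mu)t-(\mu+\nu)$, so that all the subsequent algebra stays linear in $t$.

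First I would treat $\tau$ as the quantity to be calibrated and force the two case-bounds to meet. In the saturated case $\boldsymbol{x}(E)=k$, Lemma \ref{lem3.1} gives $g(\boldsymbol{x})-c(\boldsymbol{x}) \geq k\tau$; in the unsaturated case $\boldsymbol{x}(E)<k$, Lemma \ref{lem3.2} gives $g(\boldsymbol{x})-c(\boldsymbol{x}) \geq \frac{t-1}{t}g(\boldsymbol{x^{*}}) - (t-1)c(\boldsymbol{x^{*}}) + k\tau(\frac{1}{t}\nu - \frac{t-1}{t}\mu)$. Setting the Lemma \ref{lem3.2} bound equal to the Lemma \ref{lem3.1} value $k\tau$ and solving the resulting linear equation for $k\tau$ yields $k\tau = \frac{t-1}{(1+\mu)t-(\mu+\nu)}g(\boldsymbol{x^{*}}) - \frac{t(t-1)}{(1+\mu)t-(\mu+\nu)}c(\boldsymbol{x^{*}})$. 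With $\tau$ defined by this identity, the saturated bound reproduces exactly this target, while the unsaturated bound meets it with equality, so in both cases $g(\boldsymbol{x})-c(\boldsymbol{x}) \geq \frac{t-1}{t+\mu(t-1)-\nu}g(\boldsymbol{x^{*}}) - \frac{t-1}{t+\mu(t-1)-\nu}\cdot t\cdot c(\boldsymbol{x^{*}})$. This is precisely the claimed pair $(\rho_1,\rho_2)$ with $\rho_2=t\rho_1$, valid for every admissible $t\geq 1$.

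Next I would fix $t$. Since $\rho_2=t\rho_1$, the natural target for a $g-c$ guarantee is to charge the linear cost at its true rate, i.e. $\rho_2=1$, so that the entire discount lands on the submodular gain. Imposing $\rho_2=1$, that is $t(t-1)=(1+\mu)t-(\mu+\nu)$, collapses to the quadratic $t^{2}-(2+\mu)t+(\mu+\nu)=0$, whose discriminant is $(2+\mu)^{2}-4(\mu+\nu)=\mu^{2}+4-4\nu=\delta$. I would then take the larger root $t=\frac{2+\mu+\sqrt{\delta}}{2}$: for $\nu<1$ the smaller root lies strictly below $1$ (it would force $\rho_1=1/t>1$, an impossible over-capture of the gain), so the admissibility requirement $t\geq 1$ of Algorithm \ref{alg1} selects the $+$ root, and $\delta\geq\mu^{2}\geq 0$ for $\mu,\nu\in[0,1]$ guarantees it is real. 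Substituting this root, the identity $(1+\mu)t-(\mu+\nu)=t(t-1)$ simplifies $\rho_1$ to $1/t$, giving the clean reading $g(\boldsymbol{x})-c(\boldsymbol{x}) \geq \frac{1}{t}g(\boldsymbol{x^{*}}) - c(\boldsymbol{x^{*}})$.

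The main obstacle I anticipate is not the algebra but the bookkeeping around $\tau$, $\mu$ and $\nu$: these parameters take different values across the two cases (notably $\nu=1$ when $\boldsymbol{x}(E)=k$) and are themselves produced by the run once $\tau$ is chosen, so one must avoid circularly presupposing the instance parameters when calibrating $\tau$. I would resolve this by taking $k\tau=\rho_1 g(\boldsymbol{x^{*}})-\rho_2 c(\boldsymbol{x^{*}})$ as the \emph{definition} of the threshold, then checking that the saturated bound meets the common target automatically and the unsaturated bound meets it with equality, and verifying along the way that this $k\tau$ is nonnegative so that the threshold is meaningful.
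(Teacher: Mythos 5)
Your proposal is correct and follows essentially the same route as the paper: equate the saturated bound $k\tau$ from Lemma \ref{lem3.1} with the unsaturated bound from Lemma \ref{lem3.2} to calibrate $k\tau$, then choose $t$ so that the coefficient of $c(\boldsymbol{x^{*}})$ equals $1$, which yields the quadratic $t^{2}-(2+\mu)t+(\mu+\nu)=0$ with discriminant $\delta=\mu^{2}+4-4\nu$ and forces the larger root $t=\frac{2+\mu+\sqrt{\delta}}{2}$. Your added remarks on the circularity of defining $\tau$ in terms of run-dependent quantities and on $\rho_1=1/t$ are sensible refinements but do not change the argument.
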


\begin{proof}
   Lemma \ref{lem3.1} and Lemma \ref{lem3.2} show that for any $0 \leq \mu \leq 1, 0 \leq \nu \leq 1, t \geq 1,$   the objective value satisfies
     $$ g ( \boldsymbol{x} ) - c ( \boldsymbol{x} )  \geq k \tau ,  \boldsymbol{x}(E) = k  $$
   and
    $$     g ( \boldsymbol{x} ) - c ( \boldsymbol{x} )
        \geq  \frac{t-1}{t} g ( \boldsymbol{x^{*}} ) - (t-1) c ( \boldsymbol{x^{*}} )
                     +  k \tau ( \frac{1}{t} \nu - \frac{t-1}{t} \mu ),
                     \boldsymbol{x}(E) \geq k$$
  for $ 0 \leq \mu \leq 1, 0 \leq \nu \leq 1, t \geq 1.$
  Therefore, the value of the final objective function is minimum of
   $  k \tau $ and $ \frac{t-1}{t} g ( \boldsymbol{x^{*}} ) - (t-1) c ( \boldsymbol{x^{*}} ) +  k \tau ( \frac{1}{t} \nu - \frac{t-1}{t} \mu )\} $.
  Let
  $$ k \tau = \frac{t-1}{t} g ( \boldsymbol{x^{*}} ) - (t-1) c ( \boldsymbol{x^{*}} ) +  k \tau ( \frac{1}{t} \nu - \frac{t-1}{t} \mu ),$$
  we obtain $$ k \tau = \frac{ t-1 }{ t + \mu ( t - 1 ) - \nu } g ( \boldsymbol{x^{*}} )
                       - \frac{ t-1 }{ t + \mu ( t - 1 ) - \nu } \cdot t \cdot c ( \boldsymbol{x^{*}} ).$$
  We calculate the value of $ t $ such that the coefficient $ c ( \boldsymbol{x^{*}} ) $ becomes 1, that is,
     $$ \frac{ t-1 }{ t + \mu ( t - 1 ) - \nu } \cdot t = 1 $$
  We can get two values for $ t $ from the above equality
    $$ t_1 = \frac{ 2 + \mu + \sqrt{ \delta } } { 2 } > 1 + \mu, $$
  and
    $$ t_2 = \frac{ 2 + \mu - \sqrt{ \delta } } { 2 } < 1 + \mu, $$
  where $ \delta = ( 2 + \mu )^{2} - 4 ( \mu + \nu ) = 4 + \mu^{2} - 4 \nu $ and $ \delta > 0 $, $ \sqrt{ \delta } > \mu $ under the assumption $ 0 \leq \mu \leq 1 $ and $ 0 \leq \nu \leq 1  $.

  Depending on the range of $ t $, we take $ t = t_1 $. Then
    $$ k \tau = \frac{ t-1 }{ t + \mu ( t - 1 ) - \nu } g ( \boldsymbol{x^{*}} )
                       - \frac{ t-1 }{ t + \mu ( t - 1 ) - \nu } \cdot t \cdot c ( \boldsymbol{x^{*}} ).$$
  and the threshold value
    $$ \tau = \frac{1}{k} \frac{ t-1 }{ t + \mu ( t - 1 ) - \nu } g ( \boldsymbol{x^{*}} )
                       -  \frac{1}{k} \frac{ t-1 }{ t + \mu ( t - 1 ) - \nu } \cdot t \cdot c ( \boldsymbol{x^{*}} ).$$
  where $$ t = \frac{ 2 + \mu + \sqrt{ \delta } } { 2 } > 1 + \mu,$$
        $$ \delta =  4 + \mu^{2} - 4 \nu, $$
        $$ 0 \leq \mu \leq 1, $$
        $$ 0 \leq \nu \leq 1.$$
  Therefore, we can get the conclusion of the Theorem \ref{thm3.1} that the bicriteria ratio output by Algorithm \ref{alg1} is
   $ ( \frac{ t-1 }{ t + \mu ( t-1 ) - \nu }, \frac{ t-1 }{  t + \mu ( t-1 ) - \nu } \cdot t )  $.
\end{proof}

  From the  previous discussion of Theorem \ref{thm3.1}, we have
   $$ \tau = \frac{1}{k} \frac{ t-1 }{ t + \mu ( t - 1 ) - \nu } g ( \boldsymbol{x^{*}} )
                       -  \frac{1}{k} \frac{ t-1 }{ t + \mu ( t - 1 ) - \nu } \cdot t \cdot c ( \boldsymbol{x^{*}} ).$$
  So, we need to caculate the optimal value of function
   $$ f( \boldsymbol{x} ) = \frac{ t-1 }{ t + \mu ( t - 1 ) - \nu } g ( \boldsymbol{x} )
                       - \frac{ t-1 }{ t + \mu ( t - 1 ) - \nu } \cdot t \cdot c ( \boldsymbol{x} ). $$
  In order to estimate the value $ f( \boldsymbol{x}^{*} ) $, we take the means of
  the maximum singleton value $ m = \max\limits_{ e \in E } f( \chi_e ) $ instead which
  inspired by the ideas of Badanidiyuru et al.\cite{2014Badanidiyuru} and Ene \cite{2020Ene}. Also, form their discussion above the OPT, we can ensure that $ m \le OPT \le k m $. It takes $ O ( \log k / \varepsilon )$ values falling in $ [ m, km ]$ to get a $ ( 1 + \varepsilon ) $- approximation to OPT. Unfortunately, with this approach we need to read the whole stream data two passes. In order to reduce the number of data reads to one, we relax  $m$ to the maximum singleton value of all arrived elements.
  When different thresholds are selected, we can parallel run the original algorithm with a run count of $ O ( \log k / \varepsilon )$.
  Therefore, the total memory is $ O ( k \log k / \varepsilon )$.

  In addition, we discuss the query number for per element. During the run of Algorithm \ref{alg1}, we need call the Algorithm \ref{alg2} to determine the elements retained in the storage and the level of each preserved element, that is, any $l$ should satisfy both $\frac{ f( l \chi_e | \boldsymbol{x} )}{l} \geq \tau $ and $f( \chi_e | \boldsymbol{x} + l \chi_e ) < \tau $. In order to obtain a valid level $l$, we need to query function $f$ with $O(\log K)$  times in Algorithm \ref{alg2}.  So, we can obtain that the query number for per element is $ O(\log K)$.

\subsection{Streaming Algorithm for  $\alpha G - C $ }\label{subsec3.2}
In this section, we shall discuss the case where the first function $ g $ in the objective function does not satisfy submodularity. We assume that the first function $ g $ in the objective satisfies $ \alpha $-weakly submodular, non-negative, and monotoneand.  The second function $ c $ in the objective satisfies non-negative linearity. The problem is described as follows:
     \bea\label{Pro2}\nn
        & \mathrm{maximize}&   g(\boldsymbol{x})-c(\boldsymbol{x}),\\
        & \mathrm{s. t. } &  \boldsymbol{x}\leq \boldsymbol{b}, \\ \nn
        & &  \boldsymbol{x}(E)\leq k,
     \eea
  We first  define $ \alpha $-weakly submodular function in the integer lattice.

  \begin{algorithm}[h]
     \caption{Streaming algorithm for $\alpha G-C $ on the integer lattice }\label{alg3}
     \begin{algorithmic}[1]
     \REQUIRE   Data stream $ E $,  non-negative normalized monotone $ \alpha $-weakly submodular function $ g $ ,  non-negative linear function $ c $, cardinality constraint $ k \in \mathbb{N}_+ $,  threshold value $ \tau_\alpha $, $ 0 < \alpha \geq 1$.
     \ENSURE a vector $\boldsymbol{x}\in \mathbb{N}^{E}$.

     \STATE {$ \boldsymbol{x}\leftarrow \boldsymbol{0}$;}
        \FOR {$e\in E$}
           \IF{ $ \boldsymbol{x}(E) < k$,  }
                \STATE {find the level $l$ with $ \textbf{BS Algorithm} \bigg( g,c, \boldsymbol{x}, \boldsymbol{b}, e, k, \tau_\alpha \bigg); $}
             \IF{ $ \boldsymbol{x}+l\chi_{e_i}(E)\leq k$, }
                 \STATE {update $\boldsymbol{x}\leftarrow \boldsymbol{x}+l\chi_{e_i}; $}
             \ENDIF
           \ENDIF
         \ENDFOR
     \RETURN {$\boldsymbol{x}$}
     \end{algorithmic}
  \end{algorithm}

  \begin{algorithm}[h]
    \caption{BS Algorithm $(g, c, \boldsymbol{x}, \boldsymbol{b}, e, k, \tau_\alpha)$ }
    \label{alg4}
    \begin{algorithmic}[1]
        \REQUIRE {Data stream $E$, $e\in E$,  submodular function $g\in \mathcal{\mathcal{G}}_b$, $c\in \mathcal{\mathcal{C}}_b$,   $\boldsymbol{x} \in \mathbb{N}^{E}$,  and $\tau_\alpha \in \mathbb{R}_+$.}
        \ENSURE {l.}

        \STATE {$ l_m\leftarrow 1 $;}
        \STATE {$l_n\leftarrow \min \bigg\{\boldsymbol{b}(e)-\boldsymbol{x}(e), k-\boldsymbol{x}(e)\bigg\}$;}
           \IF {$g(\chi_e|\boldsymbol{x})- ( 1 + \alpha ) c(\chi_e) < \tau_\alpha$}
               \STATE {return $0$.}
           \ENDIF
           \IF {$g(l_n\chi_e|\boldsymbol{x})- ( 1 + \alpha ) c(l_n\chi_e)  \geq \tau_\alpha$}
               \STATE {return $l_n$}
           \ENDIF
           \WHILE{ $l_m<l_n+1$,}
                  \STATE { $a=\lfloor \frac{l_n+l_m}{2}\rfloor$ }
               \IF{$g(l_n\chi_e|\boldsymbol{x})- ( 1 + \alpha ) c(l_n\chi_e)  \geq \tau_\alpha$ }
                   \STATE{$l_m=a,$}
               \ELSE{}
                   \STATE{$l_n=a,$}
               \ENDIF
           \ENDWHILE
           \RETURN {$l_m$}
      \end{algorithmic}
   \end{algorithm}

 \begin{defi} \cite {2018Kuhnle} \label{def2.3}
    For any $ \boldsymbol{s}, \boldsymbol{t} \in D _{ \boldsymbol{c} } $ with
    $ \boldsymbol{s} \leq \boldsymbol{t} $ and $ e\in G $, the $ \alpha $-weakly submodular function  $g$ in $ G_{ \boldsymbol{c} } $
    is the maximum scalar $ \alpha_g  $  such that
    $$ \alpha_g  g( \chi_{e} | \boldsymbol{t} ) \leq g( \chi_{e} | \boldsymbol{s} )$$
    where $ \boldsymbol{t} + \chi_e \in E_{ \boldsymbol{c} } $.
 \end{defi}

  In Algorithm \ref{alg3}, the first step involves setting a specific threshold value for the element present in the data stream. This becomes our preliminary standard when determining whether these data points should be incorporated into the current solution set. However, this decision-making process is not solely reliant on the set threshold. We also refer to Algorithm \ref{alg4}, which helps us finalize the exact level (denoted as $l$ here) of the current element that should be included in the solution. To better understand this, consider reaching an element in the stream which the value of $g(\chi_e|\boldsymbol{x})- ( 1 + \alpha ) c(\chi_e) $ does not exceed the previously established threshold, under such circumstances, we follow through with the decision to discard that element, choosing not to maintain it within our current solution. The rationale behind this decision arises from the fact that this element does not meet the defined threshold condition, indicating that its contribution to resolving the problem is likely insignificant or counterproductive. Conversely, if the same evaluation results in a value exceeding the threshold, the element is deemed essential and retained within the solution. When it comes to determining the number of the element that can be kept, we rely on inequality
   $$g(l_n\chi_e|\boldsymbol{x})- ( 1 + \alpha ) c(l_n\chi_e)  \geq \tau_\alpha.$$
  The intricacies of these mechanisms and steps are better explained and extensively covered within the respective descriptions of Algorithm \ref{alg3} and Algorithm \ref{alg4}.

  The analytical approach is similar to the proof methodology detailed in Sect.\ref{subsec3.1}. To make explanations clearer and easier to understand, we have based our discussions on several assumptions.  First, we define $\boldsymbol{x^{*}}$ to be an optimal solution, with $\boldsymbol{x}$ representing the final solution vector generated by Algorithm \ref{alg3}. Subsequently, from the vectors mentioned earlier, we identify two key values: $p$ and $q$. Here, $p$ equals the value of $\boldsymbol{x^{*}}(E)$, while $q$ corresponds to the value of $\boldsymbol{x}(E)$.
  Following this, we suppose $\boldsymbol{x}_{i-1}$ represents the set when element $e_i$ arrives, for each instance of $i$ ranging from 1 to $q$, $\boldsymbol{x}_0=\boldsymbol{0}$, $\boldsymbol{x}_q=\boldsymbol{x}$.
  Within the context of this discussion, we mainly examine two special scenarios:
  When $\boldsymbol{x}(E)$ equals $k$, our discussion primarily revolves around Lemma \ref{lem3.3}.
  When $\boldsymbol{x}(E)$ is less than $k$. our discussion corresponding to this case are included in Lemma \ref{lem3.4}.

  \begin{lem}\label{lem3.3}
     Assume $\boldsymbol{x}$  is the solution output by Algorithm \ref{alg3}, when $\boldsymbol{x}(E)=k$, then the objective function value satisfies  $ g ( \boldsymbol{x} ) - c ( \boldsymbol{x} ) \geq k \tau_\alpha $.
  \end{lem}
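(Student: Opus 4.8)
The plan is to mirror the proof of Lemma \ref{lem3.1} almost verbatim, with the scalar $t$ replaced throughout by the factor $1+\alpha$ that governs the acceptance test of BS Algorithm \ref{alg4}. The point I want to flag at the outset is that in this $\boldsymbol{x}(E)=k$ regime neither submodularity nor $\alpha$-weak submodularity is actually needed: the argument is a pure telescoping/bookkeeping computation. The $\alpha$-weak submodularity of Definition \ref{def2.3} will be called upon only in the companion Lemma \ref{lem3.4} (the case $\boldsymbol{x}(E)<k$), where one must lower-bound the gain of the discarded mass $g(\bar{\boldsymbol{x}}\mid\boldsymbol{x})$.

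First I would invoke the acceptance rule of Algorithm \ref{alg4}: every coordinate $e_i$ that is admitted into $\boldsymbol{x}$ at level $l_i$ passed the threshold test, so
$$ g(l_i\chi_{e_i}\mid\boldsymbol{x}_{i-1}) - (1+\alpha)\,c(l_i\chi_{e_i}) \geq \tau_\alpha . $$
Summing over all pieces selected into $\boldsymbol{x}$, whose total multiplicity is $\boldsymbol{x}(E)=k$, gives
$$ \sum_i g(l_i\chi_{e_i}\mid\boldsymbol{x}_{i-1}) - (1+\alpha)\sum_i c(l_i\chi_{e_i}) \geq \tau_\alpha\,\boldsymbol{x}(E) = k\tau_\alpha . $$

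Next I would collapse the two sums. Because $\boldsymbol{x}_{i-1}$ is the running solution immediately before $e_i$ is added and $\boldsymbol{x}_i=\boldsymbol{x}_{i-1}+l_i\chi_{e_i}$, the marginal terms telescope exactly, $\sum_i g(l_i\chi_{e_i}\mid\boldsymbol{x}_{i-1}) = g(\boldsymbol{x})-g(\boldsymbol{0}) = g(\boldsymbol{x})$, where I use that $g$ is normalized; and by linearity of $c$ we have $\sum_i c(l_i\chi_{e_i}) = c(\boldsymbol{x})$. Substituting yields $g(\boldsymbol{x}) - (1+\alpha)\,c(\boldsymbol{x}) \geq k\tau_\alpha$. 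Finally, since $\alpha>0$ and $c$ is non-negative, $(1+\alpha)c(\boldsymbol{x}) \geq c(\boldsymbol{x})$, whence
$$ g(\boldsymbol{x}) - c(\boldsymbol{x}) \geq g(\boldsymbol{x}) - (1+\alpha)\,c(\boldsymbol{x}) \geq k\tau_\alpha , $$
which is the claim.

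I do not anticipate a genuine obstacle, as this is the easy direction. The only two points that demand care are: the exact telescoping of the marginals, which hinges on $\boldsymbol{x}_{i-1}$ being the true intermediate iterate rather than an approximation to it; and the bookkeeping of multiplicities, so that the right-hand side correctly accumulates to $\tau_\alpha\,\boldsymbol{x}(E)=k\tau_\alpha$. Concretely, one should read the per-piece acceptance inequality as contributing $\tau_\alpha$ per unit of level selected (so that a coordinate chosen at level $l_i$ contributes $l_i\tau_\alpha$); summing over the selected coordinates then reproduces $\tau_\alpha\,\boldsymbol{x}(E)$ rather than a count of distinct coordinates, which is exactly where the factor $k$ comes from.
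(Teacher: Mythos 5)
Your proof is correct and follows essentially the same route as the paper: sum the acceptance inequalities from Algorithm \ref{alg4}, collapse the marginal gains, and drop the $\alpha\,c(\boldsymbol{x})$ term using nonnegativity of $c$. Two of your side remarks are in fact sharper than the paper's own write-up: the paper justifies the collapse by citing ``submodularity of $g$'', which is not even assumed in this section ($g$ is only $\alpha$-weakly submodular), whereas you correctly observe that the sum $\sum_i g(l_i\chi_{e_i}\mid\boldsymbol{x}_{i-1})$ telescopes \emph{exactly} to $g(\boldsymbol{x})-g(\boldsymbol{0})=g(\boldsymbol{x})$ with no structural hypothesis on $g$ beyond normalization; and the multiplicity bookkeeping you flag is real --- the acceptance test as literally written yields only $\tau_\alpha$ per \emph{event}, so obtaining $\tau_\alpha\,\boldsymbol{x}(E)=k\tau_\alpha$ on the right-hand side requires reading the threshold as $l_i\tau_\alpha$ per accepted level-$l_i$ block (equivalently, summing over the multiset $\{\boldsymbol{x}\}$ as the paper silently does). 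With that reading your argument is complete.
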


  \begin{proof}
   For each element selected into the output solution, according to the selection rule in Algorithm \ref{alg3}, the following inequality must be satisfied
    $ g( l_i \chi_{ e_i } | \boldsymbol{x}_{i-1} ) - ( 1 + \alpha ) c ( l_i \chi_{ e_i } ) \geq \tau_\alpha $.
    For all elements selected into the output solution $ { \boldsymbol{x} } $,  by adding the inequalities they satisfy, we can obtain the following comprehensive inequality
  $$ \sum \limits_{ e_i \in \{ \boldsymbol{x} \} }  g ( l_i \chi_{ e_i } | \boldsymbol{x}_{i-1} )
                   - ( 1 + \alpha ) \sum \limits_{ e_i \in \{ \boldsymbol{x} \} }  c ( l_i \chi_{ e_i } )
               \geq \sum \limits_{ e_i \in \{ \boldsymbol{x} \} } \tau_\alpha
               = \tau_\alpha \boldsymbol{x} (E).$$
   Since function $ g $ is submodular and function $ c $ is linear, rearranging the inequality,
   By combining our knowledge about the submodular property of function $ g $  and the linear behavior of function $ g $ , we can deduce the following inequality
   $$ g ( \boldsymbol{x} ) - ( 1 + \alpha ) c ( \boldsymbol{x} ) \geq \tau_\alpha \boldsymbol{x} (E) = \tau_\alpha k. $$
   Since $ t \geq 1 $,
   we can draw the following conclusions
    $$ g ( \boldsymbol{x} ) - c ( \boldsymbol{x} ) \geq g ( \boldsymbol{x} ) - ( 1 + \alpha ) c ( \boldsymbol{x} ) \geq  k \tau_\alpha. $$

   The proof of the conclusion of Lemma \ref{lem3.3} is completed.
\end{proof}

\begin{lem}\label{lem3.4}
    Suppose $ \boldsymbol{x} $ is the output solution of Algorithm \ref{alg3} and satisfied $ \boldsymbol{x}(E) < k $, then the objective function value satisfies the following inequality
    $$ g ( \boldsymbol{x} ) - c ( \boldsymbol{x} ) \geq  \frac{ \alpha } { 1 + \alpha } g ( \boldsymbol{ x^{*} } )
      -  c ( \boldsymbol{ x^{*} } )
      +  \frac{ \nu - \mu } { 1 + \alpha }  k \tau_\alpha , $$
   where $ 0 < \mu \leq 1 $, $ 0 < \nu \leq 1 $, $ t \geq 1 $.
\end{lem}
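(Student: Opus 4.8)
The plan is to mirror the proof of Lemma \ref{lem3.2} almost verbatim, replacing the submodularity of $g$ by the $\alpha$-weak submodularity of Definition \ref{def2.3} and replacing the multiplier $t$ by $1+\alpha$ (which is exactly the coefficient used in the rejection test of Algorithm \ref{alg4}). As before I set $\{\bar{\boldsymbol{x}}\}=\{\boldsymbol{x^{*}}\}\setminus\{\boldsymbol{x}\}$, write $\boldsymbol{x}_{i-1}$ for the current solution at the moment each copy $e_i\in\{\bar{\boldsymbol{x}}\}$ arrives, and at the end substitute $\bar{\boldsymbol{x}}(E)=\mu k$ and $\boldsymbol{x}(E)=\nu k$. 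The argument splits into a lower bound for $g(\boldsymbol{x})$ and an upper bound for $c(\boldsymbol{x})$.

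For the lower bound on $g(\boldsymbol{x})$, every copy in $\{\bar{\boldsymbol{x}}\}$ was rejected, so $g(\chi_{e_i}\mid\boldsymbol{x}_{i-1})-(1+\alpha)c(\chi_{e_i})<\tau_\alpha$, and summing over $\{\bar{\boldsymbol{x}}\}$ gives $\sum_{e_i}g(\chi_{e_i}\mid\boldsymbol{x}_{i-1})-(1+\alpha)c(\bar{\boldsymbol{x}})<\tau_\alpha\,\bar{\boldsymbol{x}}(E)$. The crucial step is to bound $\sum_{e_i}g(\chi_{e_i}\mid\boldsymbol{x}_{i-1})$ from below by $\alpha\,g(\bar{\boldsymbol{x}}\mid\boldsymbol{x})$. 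I would do this by telescoping $g(\bar{\boldsymbol{x}}\mid\boldsymbol{x})$ into single-coordinate marginals taken over partial sums $\boldsymbol{x}+\sum_{l<j}\chi_{f_l}$ and applying Definition \ref{def2.3} to each term with $\boldsymbol{s}=\boldsymbol{x}_{i-1}$ and $\boldsymbol{t}=\boldsymbol{x}+\sum_{l<j}\chi_{f_l}$; this is legitimate because the solution only grows, so $\boldsymbol{x}_{i-1}\le\boldsymbol{x}\le\boldsymbol{t}$. Combining $\alpha\,g(\bar{\boldsymbol{x}}\mid\boldsymbol{x})\le\sum_{e_i}g(\chi_{e_i}\mid\boldsymbol{x}_{i-1})$ with the monotonicity bound $g(\bar{\boldsymbol{x}}\mid\boldsymbol{x})\ge g(\boldsymbol{x^{*}})-g(\boldsymbol{x})$ and the linearity bound $c(\bar{\boldsymbol{x}})\le c(\boldsymbol{x^{*}})$ yields
\[
g(\boldsymbol{x})\ \ge\ g(\boldsymbol{x^{*}})-\frac{1+\alpha}{\alpha}c(\boldsymbol{x^{*}})-\frac{\mu k\tau_\alpha}{\alpha}.
\]

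For the upper bound on $c(\boldsymbol{x})$, every accepted block satisfies $g(l_i\chi_{e_i}\mid\boldsymbol{x}_{i-1})-(1+\alpha)c(l_i\chi_{e_i})\ge\tau_\alpha$, and summing telescopes exactly (no submodularity needed) to $g(\boldsymbol{x})-(1+\alpha)c(\boldsymbol{x})\ge\tau_\alpha\,\boldsymbol{x}(E)$, that is $c(\boldsymbol{x})\le\frac{1}{1+\alpha}g(\boldsymbol{x})-\frac{\tau_\alpha}{1+\alpha}\boldsymbol{x}(E)$. Subtracting this from $g(\boldsymbol{x})$ gives $g(\boldsymbol{x})-c(\boldsymbol{x})\ge\frac{\alpha}{1+\alpha}g(\boldsymbol{x})+\frac{\tau_\alpha}{1+\alpha}\boldsymbol{x}(E)$; inserting the lower bound for $g(\boldsymbol{x})$, substituting $\bar{\boldsymbol{x}}(E)=\mu k$ and $\boldsymbol{x}(E)=\nu k$, and simplifying produces exactly $\frac{\alpha}{1+\alpha}g(\boldsymbol{x^{*}})-c(\boldsymbol{x^{*}})+\frac{\nu-\mu}{1+\alpha}k\tau_\alpha$, as claimed.

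The main obstacle I anticipate is the weak-submodularity step. In the submodular proof, submodularity is used twice (once to pass from the arrival set to $\boldsymbol{x}$, once for subadditivity of the marginals), which is harmless because each use is lossless. Here each invocation of Definition \ref{def2.3} costs a factor $\alpha$, so the argument must be arranged so that weak submodularity is applied only once, via the term-by-term telescoping above; this is precisely what makes a single factor $\alpha$ appear, turning the $g(\boldsymbol{x^{*}})$ coefficient into $\frac{\alpha}{1+\alpha}$ and, together with the $1+\alpha$ multiplier in the threshold test, collapsing the $c(\boldsymbol{x^{*}})$ coefficient to exactly $1$. Getting the $\alpha$ and $1+\alpha$ factors to cancel correctly is the delicate bookkeeping; the remainder is routine rearrangement exactly as in Lemma \ref{lem3.2}.
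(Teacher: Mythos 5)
Your proposal is correct and follows essentially the same route as the paper's own proof: reject-inequalities summed over $\{\bar{\boldsymbol{x}}\}$ with a single application of $\alpha$-weak submodularity to get the lower bound $g(\boldsymbol{x}) \geq g(\boldsymbol{x^{*}}) - \frac{1+\alpha}{\alpha}c(\boldsymbol{x^{*}}) - \frac{\mu k \tau_\alpha}{\alpha}$, accept-inequalities to get $c(\boldsymbol{x}) \leq \frac{1}{1+\alpha}g(\boldsymbol{x}) - \frac{\tau_\alpha}{1+\alpha}\boldsymbol{x}(E)$, then the same combination and substitution $\bar{\boldsymbol{x}}(E)=\mu k$, $\boldsymbol{x}(E)=\nu k$. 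You actually supply more detail than the paper on the one delicate step (the telescoping justification of $\sum_i g(\chi_{e_i}\mid\boldsymbol{x}_{i-1}) \geq \alpha\, g(\bar{\boldsymbol{x}}\mid\boldsymbol{x})$, which the paper simply asserts), and your bookkeeping of the $\alpha$ and $1+\alpha$ factors is the corrected version of what the paper intends.
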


\begin{proof}
   Suppose
     $ \{ \bar{ \boldsymbol{x} } \} = \{ \boldsymbol{ x^{*} } \} \setminus \{ \boldsymbol{x} \} $,
     $ l = \bar{ \boldsymbol{x} } (E)$.
   For each $ i \in \{ 1, \cdots, l \} $, let $  \boldsymbol{x}_{i-1} $ be the corresponding output solution when the current element $ e_i $ is arrived but $ \chi_{ e_i } $ not added to the output solution. Next, we will separately prove the lower bound of $ g ( \bar { \boldsymbol{x} } ) $ and the upper bound of $ c ( \boldsymbol{x} ) $.

   According to the rule of selecting elements in Algorithm \ref{alg3}, for each element $ e_i \in \{ \bar{ \boldsymbol{x} } \} $,
   we have
   $$ g ( \chi_{ e_i } \mid \boldsymbol{x}_ { i-1 } ) - ( 1 + \alpha ) c ( \chi_{ e_i } ) < \tau_\alpha .$$
   For all elements selected into the output solution $ e_i \in \{ \bar{ \boldsymbol{x} } \} $, by adding the inequalities they satisfy, we can obtain the following comprehensive inequality
     \bea\label{eq3.1.1}
          \sum \limits_{ e_i \in \{ \boldsymbol{x} \} }  g (  \chi_{ e_i } | \boldsymbol{x}_{i-1} )
          -  ( 1 + \alpha ) \sum \limits_{ e_i \in \{ \boldsymbol{x} \} }  c ( \chi_{ e_i } )
        < \sum \limits_{ e_i \in \{ \boldsymbol{x} \} } \tau_\alpha
        = \tau_\alpha | \{ \bar{ \boldsymbol{x} } \} |
     \eea
   Further, based on the $ \alpha $ - weakly submodular of function $ g $, we can derive
      $$    \sum \limits_{ e_i \in \{ \bar{ \boldsymbol{x} } \} } g (  \chi_{ e_i } | \boldsymbol{x}_{i-1} )
        \geq  \alpha g ( \bar{ \boldsymbol{x} } |  \boldsymbol{x}  )
        \geq \alpha g ( \bar{ \boldsymbol{x} } +  \boldsymbol{x} ) - g ( \boldsymbol{x} ). $$
   Consider the monotonicity of function $ g $, we have
   $$     g ( \boldsymbol{ x^{*} } ) - g ( \boldsymbol{x} )
      \leq g ( \boldsymbol{ x^{*} } + \boldsymbol{x} ) - g ( \boldsymbol{x} )
         = g ( \bar{ \boldsymbol{x} } +  \boldsymbol{x} ) - g ( \boldsymbol{x} )
   $$
   Since function $ c $ is non-negative linear, we get
      $$ \sum \limits_{ e_i \in \{ \bar{ \boldsymbol{x} } \} } c (  \chi_{ e_i } ) = c (  \bar{ \boldsymbol{x} } ) \leq c ( \boldsymbol{ x^{*} } ). $$
   Considering the inequalities satisfied by $ \sum \limits_{ e_i \in \{ \boldsymbol{x} \} }  g (  \chi_{ e_i } | \boldsymbol{x}_{i-1} ) $ and
      $ \sum \limits_{ e_i \in \{ \boldsymbol{x} \} }  c ( \chi_{ e_i } ) $ together, we can rewrite inequality $ ( \ref {eq3.1.1} ) $ as follows
      $$ \alpha ( g ( \boldsymbol{ x^{*} } ) - g ( \boldsymbol{x} ) ) - ( 1 + \alpha ) c ( \boldsymbol{ x^{*} } ) \leq \tau_\alpha | \bar{ \boldsymbol{x} } (E) |, $$
   which can be formulated as
      $$ g ( \boldsymbol{x} ) \geq g ( \boldsymbol{ x^{*} } ) - \frac{ 1 + \alpha }{ \alpha } c ( \boldsymbol{ x^{*} } )
      - \frac{ \tau_\alpha | \bar{ \boldsymbol{x} } (E) | } { \alpha}. $$
   Now, we are in the position to proof the upper bound of $ c ( \boldsymbol{x} ) $.

   According to Algorithm \ref{alg3}, all the elements in the output solution $ \boldsymbol{x} $ satisfy the following inequalities
    $$  g ( l_i \chi_{ e_i } | \boldsymbol{x}_{i-1} ) - ( 1 + \alpha ) c ( l_i \chi_{ e_i } ) \geq \tau_\alpha. $$
    Summing up all the above inequalities, we get
      $$ \sum \limits_{ e_i \in \{ \boldsymbol{x} \} } g ( l_i \chi_{ e_i } | \boldsymbol{x}_{i-1} )
            - ( 1 + \alpha ) \sum \limits_{ e_i \in \{ \boldsymbol{x} \} } c ( l_i \chi_{ e_i } )
        \geq \sum \limits_{ e_i \in \{ \boldsymbol{x} \} } \tau_\alpha
         =  \tau_\alpha \boldsymbol{x} (E). $$
   Due to the monotonicity and submodularity of function $ g $ and the linearity of function $ c $, we can transform the inequality as follows
      $$ c ( \boldsymbol{x} ) \leq \frac{ 1 }{ 1 + \alpha } g ( \boldsymbol{x} ) - \frac{ \tau_\alpha }{ 1 + \alpha } \boldsymbol{x}(E). $$
   So, we can get the bound of the value of objective function
       \bea \nn
         & & g ( \boldsymbol{x} ) - c ( \boldsymbol{x} )
        \\ \nn & \geq &  g ( \boldsymbol{x} ) -  \frac{1}{ 1 + \alpha } g ( \boldsymbol{x} ) + \frac{ \tau_\alpha }{ 1 + \alpha } \boldsymbol{x}(E)
        \\ \nn & = & \frac{\alpha}{ 1 + \alpha } g ( \boldsymbol{x} ) + \frac{ \tau_\alpha }{ 1 + \alpha } \boldsymbol{x}(E)
        \\ \nn & \geq & \frac{ \alpha }{ 1 + \alpha } ( g ( \boldsymbol{x^{*}} ) - \frac{ \alpha }{ 1 + \alpha } c ( \boldsymbol{x^{*}} )
                      - \frac{\tau_\alpha}{\alpha} \bar{ \boldsymbol{x} }(E) )+ \frac{ \tau_\alpha }{ 1 + \alpha } \boldsymbol{x}(E)
        \\ \nn & = & \frac{ \alpha }{ 1 + \alpha }  g ( \boldsymbol{x^{*}} ) - c ( \boldsymbol{x^{*}} )
                 + \frac{ \boldsymbol{x}(E) - \bar{ \boldsymbol{x} }(E) }{ 1 + \alpha } \boldsymbol{x}(E)
       \eea
   Similar to Lemma 2, we also define parameters $ \bar{ \boldsymbol{x} }(E) = \mu k $ and $ \boldsymbol{x} (E) = \nu_k $,
   by introducing parameters $ \mu $ and $ \nu $ with the range of
   values for $ \mu $ and $ \nu : 0 \leq \mu \leq 1, 0 \leq \nu \leq 1 $. Then we have
       \bea \nn
         & & g ( \boldsymbol{x} ) - c ( \boldsymbol{x} )
        \\ \nn & \geq & \frac{ \alpha }{ 1 + \alpha } g ( \boldsymbol{x^{*}} ) - c ( \boldsymbol{x^{*}} )
                      - \frac{ k \nu - k \mu }{ 1 + \alpha } \tau_\alpha
        \\ \nn & = &   \frac{ \alpha }{ 1 + \alpha } g ( \boldsymbol{x^{*}} ) - c ( \boldsymbol{x^{*}} ) +  \frac{  \nu -  \mu }{ 1 + \alpha } k \tau_\alpha
       \eea
   That is, the output solution $ \boldsymbol{x} $ in Lemma \ref{lem3.4} satisfies
       $$     g ( \boldsymbol{x} ) - c ( \boldsymbol{x} )
        \geq  \frac{ \alpha }{ 1 + \alpha } g ( \boldsymbol{x^{*}} ) -  c ( \boldsymbol{x^{*}} )
                     + \frac{  \nu -  \mu }{ 1 + \alpha } k \tau_\alpha, $$
   for $ 0 \leq \mu \leq 1, 0 \leq \nu \leq 1, t \geq 1.$

   The proof of the conclusion of Lemma \ref{lem3.4} is completed.
\end{proof}

\begin{thm}\label{thm3.2}
   The bicriteria ratio for  Algorithm \ref{alg3} is
   $ ( \frac{ \alpha }{ 1 + \alpha + \mu - \nu }, \frac{ 1 + \alpha }{  1 + \alpha + \mu - \nu }  ) $,
   where $ 0 < \alpha \leq 1 $,  $ 0 \leq \mu \leq 1, 0 \leq \nu \leq 1$.
\end{thm}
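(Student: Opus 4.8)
The plan is to combine the two regime-dependent lower bounds supplied by Lemma \ref{lem3.3} and Lemma \ref{lem3.4} in exactly the manner used for Theorem \ref{thm3.1}, but the argument is now shorter because the multiplier $1+\alpha$ appearing in the threshold test of Algorithm \ref{alg4} is fixed rather than a free parameter that must be optimized. First I would note that the output $\boldsymbol{x}$ of Algorithm \ref{alg3} necessarily falls into one of the two disjoint cases $\boldsymbol{x}(E)=k$ or $\boldsymbol{x}(E)<k$, so that
$$ g(\boldsymbol{x}) - c(\boldsymbol{x}) \geq \min\left\{ k\tau_\alpha, \; \frac{\alpha}{1+\alpha}g(\boldsymbol{x^{*}}) - c(\boldsymbol{x^{*}}) + \frac{\nu-\mu}{1+\alpha}k\tau_\alpha \right\}, $$
the first entry coming from Lemma \ref{lem3.3} and the second from Lemma \ref{lem3.4}.

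Next I would tune the threshold $\tau_\alpha$ to balance the two bounds, i.e.\ set the two arguments of the minimum equal. Writing
$$ k\tau_\alpha = \frac{\alpha}{1+\alpha}g(\boldsymbol{x^{*}}) - c(\boldsymbol{x^{*}}) + \frac{\nu-\mu}{1+\alpha}k\tau_\alpha $$
and collecting the $k\tau_\alpha$ terms gives $k\tau_\alpha \cdot \frac{1+\alpha+\mu-\nu}{1+\alpha} = \frac{\alpha}{1+\alpha}g(\boldsymbol{x^{*}}) - c(\boldsymbol{x^{*}})$, hence
$$ k\tau_\alpha = \frac{\alpha}{1+\alpha+\mu-\nu}g(\boldsymbol{x^{*}}) - \frac{1+\alpha}{1+\alpha+\mu-\nu}c(\boldsymbol{x^{*}}). $$
Since $g(\boldsymbol{x})-c(\boldsymbol{x})$ is at least this balanced value, reading off the coefficients of $g(\boldsymbol{x^{*}})$ and $c(\boldsymbol{x^{*}})$ yields precisely the claimed bicriteria pair $\left(\frac{\alpha}{1+\alpha+\mu-\nu}, \frac{1+\alpha}{1+\alpha+\mu-\nu}\right)$.

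In contrast with Theorem \ref{thm3.1}, no quadratic equation for a parameter $t$ has to be solved: because the threshold test already carries the fixed weight $1+\alpha$, the coefficient of $c(\boldsymbol{x^{*}})$ falls out of the algebra automatically. The only point I would check carefully is that the common denominator $1+\alpha+\mu-\nu$ stays strictly positive over the admissible ranges $0<\alpha\leq 1$, $0\leq\mu\leq 1$, $0\leq\nu\leq 1$; since $1+\alpha+\mu-\nu \geq 1+\alpha-1 = \alpha > 0$, the two coefficients are well defined and nonnegative, so the guarantee is meaningful. This positivity verification, together with confirming that the balancing threshold indeed attains the minimum of the two bounds, is the only mild obstacle, and the rest parallels the routine algebra in the proof of Theorem \ref{thm3.1}.
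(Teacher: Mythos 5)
Your proposal is correct and follows essentially the same route as the paper's own proof: combine the bounds of Lemma \ref{lem3.3} and Lemma \ref{lem3.4}, balance the two by choosing $\tau_\alpha$ so that $k\tau_\alpha$ equals the second bound, and read off the coefficients $\frac{\alpha}{1+\alpha+\mu-\nu}$ and $\frac{1+\alpha}{1+\alpha+\mu-\nu}$. Your added check that the denominator $1+\alpha+\mu-\nu\geq\alpha>0$ is a small but worthwhile refinement that the paper omits.
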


\begin{proof}
  Based on the cardinality of $ \boldsymbol{x} $, we start to prove the conclusion.
  Lemma \ref{lem3.3} and Lemma \ref{lem3.4} show that for any $0 \leq \mu \leq 1, 0 \leq \nu \leq 1, t \geq 1,$   the objective value satisfies
     $$ g ( \boldsymbol{x} ) - c ( \boldsymbol{x} )  \geq k \tau_\alpha ,\ \ \ \  \boldsymbol{x}(E) \geq k,$$
   and
    $$       g ( \boldsymbol{x} ) - c ( \boldsymbol{x} )
        \geq  \frac{ \alpha }{ 1 + \alpha } g ( \boldsymbol{x^{*}} )
         -   c ( \boldsymbol{x^{*}} ) +  k \tau_\alpha ( \frac{ \nu - \mu }{ 1 + \alpha } ),\ \ \ \ \ \boldsymbol{x}(E) \geq k. $$
  Therefore, the value of the final objective function is minimum of
   $k \tau_\alpha $ and $ \frac{ \alpha }{ 1 + \alpha } g ( \boldsymbol{x^{*}} ) -  c ( \boldsymbol{x^{*}} ) +  k \tau_\alpha ( \frac{ \nu - \mu }{ 1 + \alpha } ) \} $.
  Denote $ k \tau_\alpha = \frac{ \alpha }{ 1 + \alpha } g ( \boldsymbol{x^{*}} ) -  c ( \boldsymbol{x^{*}} ) +  k \tau_\alpha \frac{ \nu - \mu }{ 1 + \alpha } $,
  we have $$ k \tau_\alpha = \frac{ \alpha }{ 1 + \alpha + \mu  - \nu } g ( \boldsymbol{x^{*}} )
                       - \frac{ 1 + \alpha }{ 1 + \alpha + \mu  - \nu }  c ( \boldsymbol{x^{*}} ).$$
  So,  the threshold value
    $$ \tau_\alpha = \frac{1}{k} \frac{ \alpha }{ 1 + \alpha + \mu  - \nu } g ( \boldsymbol{x^{*}} )
                       -  \frac{1}{k} \frac{ 1 + \alpha }{ 1 + \alpha + \mu  - \nu }  c ( \boldsymbol{x^{*}} ).$$
  Therefor, the solution $ \boldsymbol{x} $ output by Algorithm \ref{alg3} satisfies
   $$ g ( \boldsymbol{x} ) - c ( \boldsymbol{x} )
        \geq  k \tau_\alpha
        \geq  \frac{ \alpha }{ 1 + \alpha + \mu  - \nu } g ( \boldsymbol{x^{*}} ) -  \frac{ 1 + \alpha }{ 1 + \alpha + \mu  - \nu } c ( \boldsymbol{x^{*}} ), $$
  so, we can obtain the bicriteria ratio  is
   $ ( \frac{ \alpha }{ 1 + \alpha + \mu  - \nu }, \frac{ 1 + \alpha }{  1 + \alpha + \mu  - \nu }  )  $,
   where $ 0 < \alpha \leq 1 $,  $ 0 \leq \mu \leq 1, 0 \leq \nu \leq 1$.
   \end{proof}

   We denote $$ h_\alpha ( \boldsymbol{x^{*}} ) \triangleq \frac{ \alpha }{ 1 + \alpha + \mu  - \nu } g ( \boldsymbol{x} )
                       -   \frac{ 1 + \alpha }{ 1 + \alpha + \mu  - \nu }  c ( \boldsymbol{x} ).$$
   According to the definition of $ \alpha $-weakly submodularity, we can easily prove that $ h_\alpha ( \boldsymbol{x^{*}}) $ also satisfies $ \alpha $-weakly submodularity.
   Using the same proof method applied in Sect. \ref{sec3}, we can obtain$ h_\alpha ( \boldsymbol{x^{*}} ) $ falls in $ [ m, km / \alpha ] $.  The memory is $ O ( k \log ( k / \alpha ) / \varepsilon)$.

\section{Conclusions}
\label{sec4}

In this paper, we present the development of combinatorial approximation algorithms designed to address the maximization problem of monotone non-negative submodular function minus a nonnegative linear function in integer lattice.
The problem is tackled under a cardinality constraint and by establishing an appropriate threshold, we start off with the creation of a bicriteria streaming algorithm. Running parallel to this, we utilize the lattice binary search algorithm as a subalgorithm.
Further into the subject matter, we explore scenarios where the first function of the objective funvtion is
$ \alpha  $-weakly submodular and a nonsubmodular function. For these situations, we design a combinatorial bicriteria streaming algorithm.
Simultaneously, we present comprehensive analyses for both models. Features of these models include that the overall memory is $ O ( k \log k / \varepsilon )$ while the query number per is  $ O(\log K)$.

\section*{Acknowledgements}
The first author  is supported by Natural Science Foundation of Shandong Province (No. ZR2022MA034) and National Natural Science Foundation of China (No. 12301417).
The second author is supported by National Natural Science Foundation of China (No. 11871081).
The fourth author is  supported by National Natural Science Foundation of China (Nos. 12131003, 12001025).




\section*{Reference}

\end{document}